\documentclass[11pt,a4paper,UKenglish]{article}
\usepackage{amsmath,amsthm,amsfonts,stmaryrd}
\usepackage[linesnumbered,ruled,noend]{algorithm2e}
\usepackage{hyperref,cleveref}
\title{Notes on Equivalence and Minimization of Weighted Automata}
\author{Stefan Kiefer}
\date{}

\hyphenation{back-ward weight-ed}
\sloppy

\newtheorem{definition}{Definition}[section]
\newtheorem{proposition}[definition]{Proposition}
\newtheorem{theorem}[definition]{Theorem}

\newtheorem{lemma}[definition]{Lemma}

\crefname{equation}{}{}
\crefname{proposition}{Proposition}{Propositions}
\crefname{theorem}{Theorem}{Theorems}
\crefname{corollary}{Corollary}{Corollaries}
\crefname{algorithm}{Algorithm}{Algorithms}
\crefname{section}{Section}{Sections}
\crefname{lemma}{Lemma}{Lemmas}

\begin{document}
\maketitle
\newcommand{\A}{\mathcal{A}}
\newcommand{\AC}{\overline{\mathcal{A}}}
\newcommand{\AF}{\overrightarrow{\mathcal{A}}}
\newcommand{\AB}{\overleftarrow{\mathcal{A}}}
\newcommand{\alphaF}{\overrightarrow{\alpha}}
\newcommand{\Bh}{\widehat{B}}
\newcommand{\BV}{\mathcal{B}}
\newcommand{\br}{\subsection*{Bibliographic Remarks}}
\newcommand{\etaF}{\overrightarrow{\eta}}
\newcommand{\etaB}{\overleftarrow{\eta}}
\newcommand{\FF}{\overrightarrow{F}}
\newcommand{\Fh}{\widehat{F}}
\newcommand{\FN}{F}
\newcommand{\FV}{\mathcal{F}}
\newcommand{\Ft}{\widetilde{F}}
\newcommand{\K}{\mathbb{K}}
\newcommand{\MC}{\overline{M}}
\newcommand{\MF}{\overrightarrow{M}}
\newcommand{\MB}{\overleftarrow{M}}
\newcommand{\N}{\mathbb{N}}
\newcommand{\NC}{\textup{NC}}
\newcommand{\nF}{\overrightarrow{n}}
\newcommand{\nB}{\overleftarrow{n}}
\renewcommand{\P}[1]{{\cal P}\left(#1\right)}
\newcommand{\Q}{\mathbb{Q}}
\newcommand{\R}{\mathbb{R}}
\newcommand{\rank}{\ensuremath{\textup{rank}}}
\newcommand{\sem}[1]{\llbracket #1 \rrbracket}
\newcommand{\spa}[1]{\langle #1 \rangle}
\newcommand{\X}{\mathcal{X}}
\newcommand{\Y}{\mathcal{Y}}


\begin{abstract}
This set of notes re-proves known results on weighted automata (over a field, also known as multiplicity automata).
The text offers a unified view on theorems and proofs that have appeared in the literature over decades and were written in different styles and contexts.
\emph{None of the results reported here are claimed to be new.}

The content centres around fundamentals of equivalence and minimization, with an emphasis on algorithmic aspects.

The presentation is minimalistic.
No attempt has been made to motivate the material.
Weighted automata are viewed from a linear-algebra angle.
As a consequence, the proofs, which are meant to be succinct, but complete and almost self-contained, rely mainly on elementary linear algebra.
\end{abstract}

\section{Preliminaries}

Let $\K$ be a field.
When speaking about algorithms and computational complexity, we will implicitly take as~$\K$ the field~$\Q$ of rational numbers (where we assume that rational numbers are encoded as quotients of integers encoded in binary).
For a finite alphabet~$\Sigma$ we call a map $s : \Sigma^* \to \K$ a \emph{series}.

An \emph{automaton} $\A = (n, \Sigma, M, \alpha, \eta)$ consists of a natural number~$n$ (to which we refer as the \emph{number of states}),
 a finite alphabet~$\Sigma$,
 a map $M : \Sigma \to \K^{n \times n}$, an initial (row) vector $\alpha \in \K^n$, and a final (column) vector $\eta \in \K^n$.
Extend $M$ to a monoid homomorphism $M : \Sigma^* \to \K^{n \times n}$ by setting $M(a_1 \cdots a_k) := M(a_1) \cdots M(a_k)$ and $M(\varepsilon) := I_n$, where $\varepsilon$ is the empty word and $I_n \in \{0,1\}^{n \times n}$ the $n \times n$ identity matrix.
The \emph{semantics} of an automaton~$\A$ is the series $\sem{\A} : \Sigma^* \to \K$ with $\sem{\A}(w) = \alpha M(w) \eta$.
Automata $\A_1, \A_2$ over the same alphabet~$\Sigma$ are said to be \emph{equivalent} if $\sem{\A_1} = \sem{\A_2}$.
An automaton~$\A$ is \emph{minimal} if there is no equivalent automaton~$\A'$ with fewer states.
If $n = 0$, it is natural to put $\sem{\A}(w) = 0$ for all $w \in \Sigma^*$.

We have the following closure properties:
\begin{proposition} \label{prop:closure}
Let $\A_i = (n_i, \Sigma, M_i, \alpha_i, \eta_i)$ for $i \in \{1,2\}$ be automata.
One can compute in logarithmic space (hence, in polynomial time) automata $\A_+, \A_-, \A_\otimes$ with
$\sem{\A_+}(w) = \sem{\A_1}(w) + \sem{\A_2}(w)$ and 
$\sem{\A_-}(w) = \sem{\A_1}(w) - \sem{\A_2}(w)$ and 
$\sem{\A_\otimes}(w) = \sem{\A_1}(w) \cdot \sem{\A_2}(w)$ for all $w \in \Sigma^*$.
One can compute~$\A_+, \A_-$ with $O(|\Sigma| (n_1+n_2)^2)$ arithmetic operations.
One can compute~$\A_\otimes$ with $O(|\Sigma| n_1^2 n_2^2)$ arithmetic operations.
\end{proposition}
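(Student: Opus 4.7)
The plan is to give explicit constructions in the standard style: a direct sum for $\A_+$ and $\A_-$, and a Kronecker (tensor) product for $\A_\otimes$. In each case I would exhibit the data $(n, \Sigma, M, \alpha, \eta)$ of the output automaton, verify the semantic identity on an arbitrary word $w \in \Sigma^*$ by a short calculation, and then read off the resource bounds.

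First I would set $n_+ := n_1 + n_2$ and take block-diagonal transition matrices
\[
M_+(a) := \begin{pmatrix} M_1(a) & 0 \\ 0 & M_2(a) \end{pmatrix}, \qquad \alpha_+ := (\alpha_1,\ \alpha_2), \qquad \eta_+ := \begin{pmatrix} \eta_1 \\ \eta_2 \end{pmatrix}.
\]
Because block-diagonal matrices multiply block-wise, an easy induction on $|w|$ gives $M_+(w) = \mathrm{diag}(M_1(w), M_2(w))$, so $\sem{\A_+}(w) = \alpha_1 M_1(w)\eta_1 + \alpha_2 M_2(w)\eta_2$. For $\A_-$ I would use the same construction but negate $\eta_2$ (equivalently $\alpha_2$). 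Each entry of the output is either zero or a copy of one entry from $M_i(a)$, $\alpha_i$, $\eta_i$ (possibly with a sign flip), so the construction is computable in logarithmic space and uses $O(|\Sigma|(n_1+n_2)^2)$ arithmetic operations, the cost being the writing out of the $|\Sigma|$ block-diagonal matrices.

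For $\A_\otimes$ I would use the Kronecker product: $n_\otimes := n_1 n_2$, $M_\otimes(a) := M_1(a) \otimes M_2(a)$, $\alpha_\otimes := \alpha_1 \otimes \alpha_2$, $\eta_\otimes := \eta_1 \otimes \eta_2$. The key ingredient is the mixed-product identity $(A \otimes B)(C \otimes D) = (AC)\otimes(BD)$, which by induction on $|w|$ yields $M_\otimes(w) = M_1(w) \otimes M_2(w)$, and then
\[
\sem{\A_\otimes}(w) = (\alpha_1 \otimes \alpha_2)\bigl(M_1(w) \otimes M_2(w)\bigr)(\eta_1 \otimes \eta_2) = \bigl(\alpha_1 M_1(w)\eta_1\bigr)\bigl(\alpha_2 M_2(w)\eta_2\bigr).
\]
Each entry of $M_\otimes(a)$ is a single product of one entry of $M_1(a)$ and one entry of $M_2(a)$, which makes the construction log-space computable and accounts for $O(|\Sigma| n_1^2 n_2^2)$ arithmetic operations.

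There is no real obstacle: the only thing requiring any care is the verification of $M_\otimes(w) = M_1(w) \otimes M_2(w)$, which reduces to the mixed-product identity and a one-line induction. Everything else is bookkeeping on the dimensions and on the number of field operations needed to write down the transition matrices.
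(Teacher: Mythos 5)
Your proposal is correct and follows essentially the same route as the paper: block-diagonal direct sum for $\A_+$ and $\A_-$ (the paper negates $\alpha_2$ rather than $\eta_2$, an immaterial difference), and the Kronecker product with the mixed-product property for $\A_\otimes$. The resource analysis and the log-space claim are handled the same way.
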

\begin{proof}
It is straightforward to check that the automaton $\A_+ = (n_1+n_2, \Sigma, M_+, (\alpha_1, \alpha_2), \eta_+)$ with
\[
M_+(a) \ = \ \begin{pmatrix} M_1(a) & 0_{n_1, n_2} \\ 0_{n_2,n_1} & M_2(a) \end{pmatrix} \quad \text{for all $a \in \Sigma$}
 \quad \text{and} \quad 
 \eta_+ \ = \ \begin{pmatrix} \eta_1 \\ \eta_2 \end{pmatrix}
\]
is the desired automaton, where $0_{m,n}$ stands for the $m \times n$ zero matrix.

The automaton $\A_{-}$ can be constructed similarly to~$\A_+$, but $(\alpha_1, \alpha_2)$ is replaced with $(\alpha_1, -\alpha_2)$.

Let $\mathord{\otimes}$ denote the Kronecker product.
Define $\A_\otimes = (n_1 n_2, \Sigma, M_\otimes, (\alpha_1 \otimes \alpha_2), (\eta_1 \otimes \eta_2))$, where $M_\otimes(a) = M_1(a) \otimes M_2(a)$ for all $a \in \Sigma$.
Using the mixed-product property of~$\mathord{\otimes}$ (i.e., $(A B) \otimes (C D) = (A \otimes C) (B \otimes D)$),
we have for all $a_1 \cdots a_k \in \Sigma^*$:
\begin{align*}
\sem{\A_\otimes}(w) \ &= \ (\alpha_1 \otimes \alpha_2) (M_1(a_1) \otimes M_2(a_1)) \cdots (M_1(a_k) \otimes M_2(a_k))  (\eta_1 \otimes \eta_2) \\
&= \ (\alpha_1 M_1(a_1) \cdots M_1(a_k) \eta_1) \otimes (\alpha_2 M_2(a_1) \cdots M_2(a_k) \eta_2) \\
&= \ \sem{\A_1}(a_1 \cdots a_k) \cdot \sem{\A_2}(a_1 \cdots a_k) \qedhere
\end{align*}
\end{proof}

For a set~$V$ of vectors we use the notation $\spa{v \mid v \in V}$ to denote the vector space spanned by~$V$.
For an automaton~$\A$, define its \emph{forward space} as the (row) vector space $\spa{\alpha M(w) \mid w \in \Sigma^*}$.
Similarly, the \emph{backward space} of~$\A$ is the (column) vector space $\spa{ M(w) \eta \mid w \in \Sigma^* }$.

Let $s : \Sigma^* \to \K$.
The \emph{Hankel matrix} of~$s$ is the (infinite) matrix $H \in \K^{\Sigma^* \times \Sigma^*}$ with $H[x,y] = s(x y)$ for all $x,y \in \Sigma^*$.
Define $\rank(s) := \rank(H)$.

\section{Equivalence Checking} \label{sec:equivalence}

First we discuss how to efficiently compute a basis of the forward space $\FV := \spa{\alpha M(w) \mid w \in \Sigma^*}$ of an automaton $\A = (n, \Sigma, M, \alpha, \eta)$.
It is a matter of basic linear algebra to check that $\FV$ is the smallest vector space that contains~$\alpha$ and is closed under post-multiplication of~$M(a)$ (i.e., $\FV M(a) \subseteq \FV$) for all $a \in \Sigma$.
Hence \cref{alg:forward-space-basic} computes a basis of~$\FV$.

\begin{algorithm}
\DontPrintSemicolon
\If{$\alpha = 0$}{\Return{$\emptyset$}}
$W := \{\varepsilon\}$\;
\While{$\exists\, w \in W\; \exists\, a \in \Sigma : \alpha M(w a) \not\in \spa{\alpha M(w) \mid w \in W}$}{
	$W := W \cup \{w a\}$
   }
\Return{$\{\alpha M(w) \mid w \in W\}$}
\caption{Computing a basis of the forward space of an automaton $\A = (n, \Sigma, M, \alpha, \eta)$.}
\label{alg:forward-space-basic}
\end{algorithm}

\Cref{alg:forward-space-basic} actually computes a set~$W$ of words such that $\{\alpha M(w) \mid w \in W\}$ is a basis of the forward space~$\FV$.
These words will be of interest, e.g., to compute a word~$w$ that ``witnesses'' the inequivalence of two automata.
Since $\FV$ is a subspace of $\K^{n}$, its dimension, say $\nF$, is at most~$n$.
It follows that $|W| = \nF \le n$ and $|w| \le \nF-1$ holds for all $w \in W$.

We want to make \cref{alg:forward-space-basic} efficient.
First, in addition to the words~$w$ we save the vectors $\alpha M(w)$ to avoid unnecessary vector-matrix computations.
Second, we use a worklist, implemented as a queue, to keep track of which vectors are new in the basis of~$\FV$ computed so far.
These refinements result in \cref{alg:forward-space-intermediate}.

\begin{algorithm}
\DontPrintSemicolon
\If{$\alpha = 0$}{\Return{$\emptyset$}}
$P := \{(\varepsilon,\alpha)\}$\;
$Q := [(\varepsilon,\alpha)]$\;
\Repeat{$\mathit{isEmpty}(Q)$}{
 $(w,v) := \mathit{dequeue}(Q)$\;
 \ForAll{$a \in \Sigma$}{
  $w' := w a$\;
  $v' := v M(a)$\;
  \If{$v' \not\in \spa{u \mid (x,u) \in P}$ \label{algline:forward-space-intermediate}}{
   $P := P \cup \{(w', v')\}$\;
   $Q := \mathit{enqueue}(Q,(w',v'))$\;
  }
 }
}
\Return{$P$}
\caption{Computing
$\{(w_1,v_1), \ldots, (w_{\protect\nF}, v_{\protect\nF})\} \subseteq \Sigma^{\le \protect\nF-1} \times \K^{n}$
such that $\{v_1, \ldots, v_{\protect\nF}\}$
is a basis of the forward space of an automaton $\A = (n, \Sigma, M, \alpha, \eta)$}
\label{alg:forward-space-intermediate}
\end{algorithm}

\Cref{algline:forward-space-intermediate} of \Cref{alg:forward-space-intermediate} requires a check for linear independence.
Using Gaussian elimination, such a check can be carried out with $O(n^3)$ arithmetic operations.
To make this more efficient, one can keep a basis of the vector space $\spa{u \mid (x,u) \in P}$ in echelon form.
With such a basis at hand, the check for linear independence amounts to performing one iteration of Gaussian elimination, which takes $O(n^2)$ operations, and checking if the resulting vector is non-zero.
If it is indeed non-zero, it can be added to the basis, thus preserving its echelon form.%
\footnote{For improved numerical stability of the computation, instead of using a basis in echelon form, one may keep an orthonormal basis, against which the new vector is orthogonalized using one iteration ($O(n^2)$ operations) of the modified Gram-Schmidt process.}
 
Since $\nF \le n$, it follows that \cref{algline:forward-space-intermediate} is executed $O(n |\Sigma|)$ times.
Hence we have:
\begin{proposition} \label{prop:compute-forward-intermediate}
Let $\A = (n, \Sigma, M, \alpha, \eta)$ be an automaton.
One can compute in polynomial time (with $O(|\Sigma| n^3)$ arithmetic operations) a set $\{(w_1,v_1), \ldots, (w_{\nF}, v_{\nF})\} \subseteq \Sigma^{\le \nF-1} \times \K^{n}$ such that $\{v_1, \ldots, v_{\nF}\}$ is a basis of~$\FV$ and $v_i = \alpha M(w_i)$ holds for all $1 \le i \le \nF$.
\end{proposition}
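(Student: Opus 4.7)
The plan is to verify correctness and complexity of \cref{alg:forward-space-intermediate} separately, reusing the observations already made in the paragraphs preceding the proposition.

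For correctness, I would maintain the loop invariant that $P = \{(x_1, u_1), \ldots, (x_k, u_k)\}$ with $u_i = \alpha M(x_i)$ for all $i$ and with $u_1, \ldots, u_k$ linearly independent, and that every pair ever added to~$P$ is eventually dequeued from~$Q$. Both halves of this invariant follow by inspection of the algorithm: \cref{algline:forward-space-intermediate} preserves linear independence by construction, while $v' = v M(a)$ whenever $w' = w a$ propagates the identity $u_i = \alpha M(x_i)$ inductively from the initial entry $(\varepsilon, \alpha)$. At termination, for every $(x, u) \in P$ and every $a \in \Sigma$ the vector $u M(a) = \alpha M(x a)$ must already lie in $\spa{u_1, \ldots, u_k}$; combined with the characterisation of~$\FV$ recalled just before \cref{alg:forward-space-basic} as the smallest vector space containing~$\alpha$ and closed under post-multiplication by each $M(a)$, this forces $\spa{u_1, \ldots, u_k} = \FV$, so $k = \nF$ and the returned set has the required form.

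For the complexity, I would appeal to the echelon-form maintenance described just before the proposition, so that each linear-independence check on \cref{algline:forward-space-intermediate} (together with the possible basis update) costs $O(n^2)$ arithmetic operations. Since $|P| \le \nF \le n$ and each pair in~$P$ is dequeued at most once, \cref{algline:forward-space-intermediate} is reached at most $O(|\Sigma| n)$ times, yielding the claimed $O(|\Sigma| n^3)$ bound; the $v' := v M(a)$ multiplications contribute no worse a count, and the whole computation stays in polynomial time over~$\Q$.

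The word-length bound $|w_i| \le \nF - 1$ drops out of the FIFO discipline of~$Q$: each enqueued word $w a$ has length exactly one more than its dequeued parent~$w$, so the $\nF$ words recorded in~$P$ form a rooted tree at~$\varepsilon$ explored in breadth-first order, and such a tree has depth at most $\nF - 1$. I do not anticipate a genuine obstacle; the most delicate ingredient is the $O(n^2)$ echelon-form update, but this was already settled in the preceding discussion.
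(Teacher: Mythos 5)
Your proposal is correct and follows essentially the same route as the paper, which establishes this proposition through the discussion surrounding \cref{alg:forward-space-basic,alg:forward-space-intermediate}: correctness via the characterisation of $\FV$ as the smallest space containing $\alpha$ and closed under post-multiplication by each $M(a)$, the word-length bound from the tree structure of the recorded words, and the $O(|\Sigma| n^3)$ count from $O(|\Sigma| n)$ executions of the $O(n^2)$ echelon-form independence check. Your write-up merely makes explicit (as a loop invariant) what the paper leaves to inspection.
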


An automaton~$\A$ is called \emph{zero} if $\sem{\A}(w) = 0$ for all $w \in \Sigma^*$.
We show:
\begin{proposition} \label{prop:zeroness-in-P}
Let $\A = (n, \Sigma, M, \alpha, \eta)$ be an automaton.
One can check in polynomial time (with $O(|\Sigma| n^3)$ arithmetic operations) whether $\A$ is zero, and if it is not, output $w \in \Sigma^*$ with $|w| \le n-1$ such that $\sem{\A}(w) \ne 0$.
\end{proposition}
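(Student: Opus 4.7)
The plan is to reduce zeroness checking directly to the forward-space computation of \cref{prop:compute-forward-intermediate}. Specifically, I would invoke that proposition to obtain a set $\{(w_1,v_1),\ldots,(w_{\nF},v_{\nF})\}$ with $v_i = \alpha M(w_i)$ and $\{v_1,\ldots,v_{\nF}\}$ a basis of the forward space $\FV$. Then I would compute the scalars $v_i \eta$ for $i = 1, \ldots, \nF$. If all of them vanish, I would declare $\A$ to be zero; otherwise, for the smallest~$i$ with $v_i \eta \ne 0$, I would output $w_i$ as the witness.

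The correctness rests on one simple observation. By definition, $\A$ is zero iff $\alpha M(w) \eta = 0$ for every $w \in \Sigma^*$, which is equivalent to saying that the linear functional $v \mapsto v \eta$ vanishes on the spanning set $\{\alpha M(w) \mid w \in \Sigma^*\}$ of~$\FV$, and hence on~$\FV$ itself. Since the $v_i$ form a basis of $\FV$, this happens iff $v_i \eta = 0$ for all $i$. Conversely, if some $v_i \eta \ne 0$, then by construction $\sem{\A}(w_i) = \alpha M(w_i) \eta = v_i \eta \ne 0$, so $w_i$ is a genuine witness. The length bound $|w_i| \le \nF - 1 \le n - 1$ is immediate from \cref{prop:compute-forward-intermediate}.

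For the complexity: computing the basis costs $O(|\Sigma| n^3)$ arithmetic operations by \cref{prop:compute-forward-intermediate}, and the additional $\nF \le n$ dot products $v_i \eta$ contribute only $O(n^2)$ further operations, giving the claimed $O(|\Sigma| n^3)$ bound overall. There is no real obstacle here; the only substantive content is the observation that zeroness of the semantics is a linear condition on the forward space, which reduces the problem to a finite basis check rather than a quantification over all words.
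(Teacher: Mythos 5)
Your proof is correct and follows essentially the same route as the paper: both reduce zeroness to checking that the linear functional $v \mapsto v\eta$ vanishes on a basis of the forward space computed via \cref{prop:compute-forward-intermediate}, with the stored words $w_i$ serving as witnesses. Your write-up is slightly more explicit about the witness extraction and the cost of the extra dot products, but there is no substantive difference.
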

\begin{proof}
Automaton~$\A = (n, \Sigma, M, \alpha, \eta)$ is zero if and only if its forward space $\FV := \spa{\alpha M(w) \mid w \in \Sigma^*}$ is orthogonal to $\eta$, i.e., $v \eta = 0$ for all $v \in \FV$.
Let $S \subseteq \K^n$ (with $|S| \le n$) be a basis of~$\FV$.
Then $\A$ is zero if and only if $v \eta = 0$ holds for all $v \in S$.
But by \cref{prop:compute-forward-intermediate} one can compute such~$S$.
Similarly, one can compute, if it exists, the ``counterexample''~$w$.
\end{proof}

\begin{theorem} \label{thm:equivalence-in-P}
Let $\A_i = (n_i, \Sigma, M_i, \alpha_i, \eta_i)$ for $i \in \{1,2\}$ be automata.
One can check in polynomial time (with $O(|\Sigma| (n_1+n_2)^3)$ arithmetic operations) whether $\A_1, \A_2$ are equivalent, and if they are not, output $w \in \Sigma^*$ with $|w| \le n_1 + n_2 - 1$ such that $\sem{\A_1}(w) \ne \sem{\A_2}(w)$.
\end{theorem}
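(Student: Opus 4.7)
The plan is to reduce equivalence checking to zeroness checking by exploiting the closure under subtraction established in \cref{prop:closure}. Two automata $\A_1, \A_2$ are equivalent if and only if $\sem{\A_1}(w) - \sem{\A_2}(w) = 0$ for all $w \in \Sigma^*$, which is the same as saying that the difference automaton $\A_-$ is zero in the sense of \cref{prop:zeroness-in-P}.

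Concretely, I would first invoke \cref{prop:closure} to build, in $O(|\Sigma|(n_1+n_2)^2)$ arithmetic operations, an automaton $\A_- = (n_1 + n_2, \Sigma, M_-, \alpha_-, \eta_-)$ with $\sem{\A_-}(w) = \sem{\A_1}(w) - \sem{\A_2}(w)$ for every $w \in \Sigma^*$. Then I would apply \cref{prop:zeroness-in-P} to $\A_-$: in $O(|\Sigma|(n_1+n_2)^3)$ arithmetic operations this either certifies that $\A_-$ is zero (in which case $\A_1$ and $\A_2$ are equivalent) or returns a word $w \in \Sigma^*$ of length at most $(n_1+n_2) - 1$ with $\sem{\A_-}(w) \ne 0$, i.e., $\sem{\A_1}(w) \ne \sem{\A_2}(w)$. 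Adding the costs, the construction of $\A_-$ is dominated by the subsequent zeroness test, so the total bound is $O(|\Sigma|(n_1+n_2)^3)$, as claimed.

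There is no real obstacle here: the length bound on the witness word is exactly what \cref{prop:zeroness-in-P} delivers (with $n$ replaced by $n_1 + n_2$), the correctness of the reduction is immediate from $\sem{\A_-} = \sem{\A_1} - \sem{\A_2}$, and the complexity bound follows by simply summing the costs of the two invoked propositions.
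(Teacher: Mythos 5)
Your proposal is correct and matches the paper's own proof exactly: construct $\A_-$ via \cref{prop:closure} and reduce to the zeroness check of \cref{prop:zeroness-in-P}, with the witness length and operation count following directly. Your version simply spells out the cost accounting in more detail.
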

\begin{proof}
Compute automaton~$\A_-$ from \cref{prop:closure}.
Then the theorem follows from \cref{prop:zeroness-in-P}.
\end{proof}

\br

Equivalence checking goes back to the seminal paper by Sch\"utzenberger from 1961~\cite{IC::Schutzenberger1961}. 
A polynomial-time algorithm could be derived from there but was not made explicit.
The books by Paz~\cite{Paz71} and Eilenberg~\cite{Eilenberg74} from 1971 and 1974, respectively, describe an exponential-time algorithm based on the fact that shortest ``counterexamples'' have length at most $n_1+n_2-1$.
An $O(|\Sigma| (n_1 + n_2)^4)$ (in terms of arithmetic operations) algorithm was explicitly provided in 1992 by Tzeng~\cite{Tzeng92}.
Improvements to $O(|\Sigma| (n_1 + n_2)^3)$ were then \mbox{(re-)}discovered, e.g., in \cite{CortesMohriRastogi07,KieferMOWW11,BerlinkovFS18}. These improvements are all based on the idea described before \cref{prop:compute-forward-intermediate}.
The abstract of the 2002 paper~\cite{Archangelsky02} indicates that this improvement was already known to some.
Incidentally, a different algorithm, also cubic in~$n$, was proposed in~\cite{Archangelsky02}.

\section{Minimization} \label{sec:minimization}

Let $\A = (n, \Sigma, M, \alpha, \eta)$ be an automaton.
Let $F \in \K^{\nF \times n}$ with $\nF \le n$ be a matrix whose rows form a basis of the forward space~$\FV$.
Similarly, let $B \in \K^{n \times \nB}$ with $\nB \le n$ be a matrix whose columns form a basis of the backward space~$\BV$.
Since $\FV M(a) \subseteq \FV$ and $M(a) \BV \subseteq \BV$ for all $a \in \Sigma$,
 there exist maps $\MF : \Sigma \to \K^{\nF \times \nF}$ and $\MB : \Sigma \to \K^{\nB \times \nB}$ such that
 \begin{equation*}
  F M(a) \ = \ \MF(a) F \quad \text{and} \quad M(a) B \ = \ B \MB(a) \quad \text{for all $a \in \Sigma$.}
 \end{equation*}
These maps $\MF, \MB$ are unique, as $F, B$ have full rank.
The above equalities extend inductively to words:
 \begin{equation}
  F M(w) \ = \ \MF(w) F \quad \text{and} \quad M(w) B \ = \ B \MB(w) \quad \text{for all $w \in \Sigma^*$}
   \label{eq:commutativity}
 \end{equation}
Let $\alphaF \in \K^{\nF}$ be the unique row vector with $\alphaF F = \alpha$, and $\etaB \in \K^{\nB}$ be the unique column vector with $B \etaB = \eta$.
Call $\AF := (\nF, \Sigma, \MF, \alphaF, F \eta)$ the \emph{forward conjugate} of~$\A$ with base~$F$,
 and $\AB := (\nB, \Sigma, \MB, \alpha B, \etaB)$ the \emph{backward conjugate} of~$\A$ with base~$B$.

\begin{proposition} \label{prop:equivalence}
 Let $\A$ be an automaton. Then $\sem{\A} = \sem{\AF} = \sem{\AB}$.
\end{proposition}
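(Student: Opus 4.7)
The plan is a direct computation for each of the two equalities, using equation~\eqref{eq:commutativity} as the key rewriting rule and the defining equations $\alphaF F = \alpha$ and $B \etaB = \eta$.

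First I would handle $\sem{\A} = \sem{\AF}$. Fix $w \in \Sigma^*$. By definition,
\[
 \sem{\AF}(w) \ = \ \alphaF \MF(w) (F \eta) \ = \ \alphaF \bigl(\MF(w) F\bigr) \eta.
\]
Now apply the forward half of \cref{eq:commutativity}, namely $\MF(w) F = F M(w)$, followed by $\alphaF F = \alpha$, to rewrite this as $\alpha M(w) \eta = \sem{\A}(w)$.

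Next I would handle $\sem{\A} = \sem{\AB}$ symmetrically. By definition,
\[
 \sem{\AB}(w) \ = \ (\alpha B) \MB(w) \etaB \ = \ \alpha \bigl(B \MB(w)\bigr) \etaB.
\]
Using the backward half of \cref{eq:commutativity}, $B \MB(w) = M(w) B$, and then $B \etaB = \eta$, this becomes $\alpha M(w) \eta = \sem{\A}(w)$.

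There is essentially no obstacle: the statement is a bookkeeping exercise whose whole content is already captured by~\cref{eq:commutativity}. The only mild subtlety is verifying that \cref{eq:commutativity} indeed extends from letters to words, which follows by a straightforward induction on the length of $w$ (using that $M$ and $\MF$, respectively $\MB$, are monoid homomorphisms and the identities telescope), and this is already asserted in the paragraph introducing \cref{eq:commutativity}.
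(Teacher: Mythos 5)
Your proposal is correct and follows essentially the same computation as the paper: expand $\sem{\AF}(w)$, apply \cref{eq:commutativity} and the definition of $\alphaF$, and likewise for $\AB$ (the paper handles the second equality by symmetry rather than writing it out). No issues.
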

\begin{proof}
 By symmetry, it suffices to show the first equality.
 Indeed, we have for all $w \in \Sigma^*$:
 \begin{align*}
   \sem{\AF}(w)
    \ & =\ \alphaF \MF(w) F \eta \\
    \ & =\ \alphaF F M(w) \eta && \text{by~\cref{eq:commutativity}} \\
    \ & =\ \alpha M(w) \eta && \text{definition of~$\alphaF$} \\
    \ & =\ \sem\A(w) \tag*{\qedhere}
 \end{align*}
\end{proof}

\begin{proposition} \label{prop:compute-conjugate}
Let $\A = (n, \Sigma, M, \alpha, \eta)$ be an automaton.
One can compute in polynomial time (with $O(|\Sigma| n^3)$ arithmetic operations)
\begin{itemize}
\item a matrix~$F$ whose rows form a basis of the forward space of~$\A$, and
\item the forward conjugate of~$\A$ with base~$F$.
\end{itemize}
The statement holds analogously for backward conjugates.
\end{proposition}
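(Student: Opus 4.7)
The plan is to apply \cref{prop:compute-forward-intermediate} to obtain a basis $\{v_1, \ldots, v_{\nF}\}$ of~$\FV$ in $O(|\Sigma| n^3)$ operations, and then read off the ingredients of $\AF = (\nF, \Sigma, \MF, \alphaF, F\eta)$ with no more than $O(|\Sigma| n^3)$ further work. Let $F \in \K^{\nF \times n}$ be the matrix whose $i$-th row is~$v_i$. Since \cref{alg:forward-space-intermediate} initializes its worklist with $(\varepsilon, \alpha)$, I may take $v_1 = \alpha$, so $\alphaF = (1, 0, \ldots, 0)$ comes for free, and $F\eta$ is a single matrix--vector product costing $O(\nF n) = O(n^2)$ operations.

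The nontrivial ingredient is $\MF(a)$ for each $a \in \Sigma$: by definition it is the unique matrix satisfying $\MF(a) F = F M(a)$, whose $i$-th row expresses $v_i M(a)$ as a linear combination of $v_1, \ldots, v_{\nF}$. For a fixed $a$, I would first compute $F M(a)$ in $O(\nF n^2) = O(n^3)$ operations. To then extract the $\nF$ rows of $\MF(a)$, I would keep $F$ in reduced row echelon form; this can be maintained incrementally during \cref{alg:forward-space-intermediate} or obtained by a single post-hoc Gaussian elimination at cost $O(\nF^2 n) = O(n^3)$. With $F$ in reduced echelon form having pivot columns $C_1 < \cdots < C_{\nF}$ and unit pivots, the coefficients of any vector $u$ in the row span of $F$ are simply $c_i = u[C_i]$, so each of the $\nF$ expansions is $O(\nF)$ and all of them together cost $O(\nF^2) = O(n^2)$ per letter. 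Summing over $\Sigma$ yields the claimed $O(|\Sigma| n^3)$ bound. The backward case is entirely symmetric: one uses the obvious analogue of \cref{alg:forward-space-intermediate} (seeded with $\eta$, pre-multiplying by $M(a)$) to build a matrix $B$ whose columns span~$\BV$, reads off $\etaB$ and $\alpha B$ directly, and solves $M(a) B = B \MB(a)$ column-by-column using the transposed echelon form of~$B$.

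The main obstacle I anticipate is purely one of complexity bookkeeping rather than of correctness. A naive row-by-row solve of $\MF(a) F = F M(a)$ via fresh Gaussian elimination would cost $O(n^3)$ per row, giving $O(|\Sigma| n^4)$ overall, which misses the target bound by a factor of~$n$. The trick is to exploit the structure already available from the basis computation: by maintaining $F$ in reduced echelon form, each basis expansion degenerates to $O(\nF)$ lookups at pivot positions, and the total cost stays dominated by the matrix multiplications $F M(a)$, giving exactly the advertised $O(|\Sigma| n^3)$.
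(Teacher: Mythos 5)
Your proposal is correct and takes essentially the same route as the paper, which likewise obtains $F$ from \cref{prop:compute-forward-intermediate} and then declares the computation of $\AF$ from $F$ ``straightforward'' --- the echelon-form bookkeeping you describe is precisely the detail the paper elides, and your cost accounting ($O(|\Sigma|n^3)$ for the products $FM(a)$, $O(|\Sigma|n^2)$ for reading off coefficients at pivot columns) is sound. One cosmetic remark: if you take $F$ itself in reduced echelon form, its first row need not be $\alpha$, so $\alphaF$ should also be read off at the pivot columns rather than being $(1,0,\ldots,0)$; this does not affect correctness or the complexity bound.
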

\begin{proof}
By \cref{prop:compute-forward-intermediate} one can compute a basis of the forward space, and hence~$F$, in the required time.
Having computed~$F$ it is straightforward to compute $\AF$ in the required time.
The same holds analogously for~$\AB$.
\end{proof}

\begin{proposition} \label{prop:direction-1}
 Let $\A$ be an automaton. Then $\rank(\sem\A) \le n$.
\end{proposition}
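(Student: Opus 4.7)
The plan is to exploit the factorization of the Hankel matrix that comes for free from the definition of $\sem{\A}$. For every $x, y \in \Sigma^*$ we have
\[
H[x,y] \ = \ \sem{\A}(xy) \ = \ \alpha M(x) M(y) \eta \ = \ \bigl(\alpha M(x)\bigr) \bigl(M(y) \eta\bigr),
\]
so $H$ factors as $H = U V$, where $U \in \K^{\Sigma^* \times n}$ is the (infinite) matrix whose row indexed by $x$ equals $\alpha M(x)$, and $V \in \K^{n \times \Sigma^*}$ is the matrix whose column indexed by $y$ equals $M(y) \eta$.

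From here I would conclude by a standard rank-of-a-product argument. Each row of $H$ is a linear combination of the $n$ rows of $V$ (the coefficients being the entries of the corresponding row of $U$), so the row space of $H$ is contained in the row space of $V$, giving $\rank(H) \le n$. Equivalently, using that rank is the largest number of linearly independent rows (or columns), any $n+1$ rows of $H$ are linear combinations of the $n$ rows of $V$ and hence are linearly dependent. Since $\rank(\sem{\A}) = \rank(H)$ by definition, this yields $\rank(\sem{\A}) \le n$.

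The only mild subtlety is that $H$, $U$, and $V$ are infinite matrices indexed by $\Sigma^*$, so one should be explicit that rank here means the supremum of the ranks of finite submatrices (equivalently, the dimension of the row space viewed as a subspace of $\K^{\Sigma^*}$). The factorization $H = U V$ still makes sense pointwise because each entry is a finite sum over the $n$ intermediate coordinates, so the argument survives verbatim. I do not expect any real obstacle here; the whole proof is essentially a one-line observation once the Hankel factorization is written down.
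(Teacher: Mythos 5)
Your proof is correct and is essentially identical to the paper's: the paper also factors the Hankel matrix as $H = \Fh \Bh$ with $\Fh \in \K^{\Sigma^* \times n}$ and $\Bh \in \K^{n \times \Sigma^*}$ (your $U$ and $V$) and concludes $\rank(H) \le \rank(\Fh) \le n$. Your remark on interpreting the rank of an infinite matrix is a sensible clarification that the paper leaves implicit.
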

\begin{proof}
 Consider the matrices $\Fh: \K^{\Sigma^* \times n}$ and $\Bh: \K^{n \times \Sigma^*}$ with
  $\Fh[w,\cdot] = \alpha M(w)$ and $\Bh[\cdot,w] = M(w) \eta$ for all $w \in \Sigma^*$.
 Note that $\rank(\Fh) \le n$ (and similarly $\rank(\Bh) \le n$).
 Let $x,y \in \Sigma^*$.
 Then $(\Fh \Bh)[x,y] = \alpha M(x) M(y) \eta = \sem\A(x y)$, so $\Fh \Bh$ is the Hankel matrix of~$\sem\A$.
 Hence $\rank(\sem\A) = \rank(\Fh \Bh) \le \rank(\Fh) \le n$.
\end{proof}

Call an automaton with $n$ states \emph{forward-minimal} (resp., \emph{backward-minimal}) if its forward (resp., backward) space has dimension~$n$.

\begin{proposition} \label{prop:forw-conj-is-forw-min}
A forward conjugate is forward-minimal.
A backward conjugate is backward-minimal.
\end{proposition}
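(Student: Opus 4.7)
The plan is to show that the forward space of $\AF$ has dimension exactly $\nF$; the backward case will follow by symmetry. By definition the forward space of $\AF$ is a subspace of $\K^{\nF}$, so its dimension is at most~$\nF$; the work is to establish a matching lower bound.

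The key idea is to use right-multiplication by~$F$ as a linear map $\varphi : \K^{\nF} \to \K^n$, $v \mapsto v F$. Since the rows of~$F$ are linearly independent (they form a basis of~$\FV$), the matrix~$F$ has full row rank~$\nF$, so~$\varphi$ is injective. I would then compute, using \cref{eq:commutativity} and the defining property $\alphaF F = \alpha$, that for every $w \in \Sigma^*$
\begin{equation*}
 \varphi(\alphaF \MF(w)) \ = \ \alphaF \MF(w) F \ = \ \alphaF F M(w) \ = \ \alpha M(w).
\end{equation*}
Hence $\varphi$ maps the spanning set $\{\alphaF \MF(w) \mid w \in \Sigma^*\}$ of the forward space of~$\AF$ onto the spanning set $\{\alpha M(w) \mid w \in \Sigma^*\}$ of~$\FV$.

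Now I would put the two observations together: since $\varphi$ is linear and injective, it preserves dimension on every subspace, and by the computation above it sends the forward space of~$\AF$ onto~$\FV$. Therefore the forward space of~$\AF$ has dimension $\dim \FV = \nF$, so $\AF$ is forward-minimal. The backward case is entirely analogous, using left-multiplication by~$B$ as an injection $\K^{\nB} \to \K^n$ and the identities $M(w) B = B \MB(w)$ and $B \etaB = \eta$.

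I do not expect any real obstacle; the proof is a short linear-algebra argument. The only thing to be careful about is keeping track of which side the matrices act on (rows for forward, columns for backward) and invoking \cref{eq:commutativity} in the correct form.
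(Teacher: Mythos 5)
Your proposal is correct and is essentially the paper's own proof: the paper's chain of equalities $\dim\spa{\alphaF \MF(w)} = \dim\spa{\alphaF \MF(w) F} = \dim\spa{\alphaF F M(w)} = \dim\spa{\alpha M(w)} = \nF$ is exactly your argument, with your injective map $\varphi(v) = vF$ corresponding to the paper's justification ``the rows of~$F$ are linearly independent.'' No differences worth noting.
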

\begin{proof}
By symmetry, it suffices to prove the statement about forward conjugates.
Let $\AF = (\nF, \Sigma, \MF, \alphaF, F \eta)$ be the forward conjugate of $\A = (n, \Sigma, M, \alpha, \eta)$ with base~$F \in \K^{\nF \times n}$.
We have:
\begin{align*}
   & \dim\, \spa{\alphaF \MF(w) \mid w \in \Sigma^*} \\
=\ & \dim\, \spa{\alphaF \MF(w) F \mid w \in \Sigma^*} && \text{the rows of~$F$ are linearly independent}\\
=\ & \dim\, \spa{\alphaF F M(w) \mid w \in \Sigma^*} && \text{by \cref{eq:commutativity}}\\
=\ & \dim\, \spa{\alpha M(w) \mid w \in \Sigma^*} && \text{definition of~$\alphaF$}\\
=\ & \dim\, \FV && \text{definition of~$\FV$} \\
=\ & \nF && \text{definition of~$\nF$} &&\qedhere
\end{align*}
\end{proof}

\begin{proposition} \label{prop:new-minimal}
A backward conjugate of a forward-minimal automaton is minimal.
A forward conjugate of a backward-minimal automaton is minimal.
\end{proposition}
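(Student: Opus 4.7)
By symmetry, it suffices to prove the first statement. Let $\A = (n, \Sigma, M, \alpha, \eta)$ be forward-minimal, i.e., $\dim \FV = n$, and let $\AB = (\nB, \Sigma, \MB, \alpha B, \etaB)$ be its backward conjugate with base $B \in \K^{n \times \nB}$. Since $\sem{\AB} = \sem{\A}$ by \cref{prop:equivalence}, and since, by \cref{prop:direction-1}, any automaton realizing $\sem{\A}$ must have at least $\rank(\sem{\A})$ states, it is enough to show $\rank(\sem{\A}) = \nB$.

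The plan is to revisit the factorization of the Hankel matrix used in the proof of \cref{prop:direction-1}. Recall the matrices $\Fh \in \K^{\Sigma^* \times n}$ with $\Fh[w,\cdot] = \alpha M(w)$ and $\Bh \in \K^{n \times \Sigma^*}$ with $\Bh[\cdot, w] = M(w) \eta$, whose product is the Hankel matrix of~$\sem{\A}$. The row space of $\Fh$ is precisely $\FV$ and the column space of $\Bh$ is precisely $\BV$, so $\rank(\Fh) = \nF$ and $\rank(\Bh) = \nB$. Forward-minimality of $\A$ means $\nF = n$, i.e., $\Fh$ has full column rank~$n$.

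Once this is set up, the conclusion is immediate: since $\Fh$ has full column rank, left-multiplication by $\Fh$ preserves the rank of any $n \times \Sigma^*$ matrix, so $\rank(\sem{\A}) = \rank(\Fh \Bh) = \rank(\Bh) = \nB$. Combined with \cref{prop:direction-1} applied to an arbitrary automaton equivalent to $\AB$, this yields that no equivalent automaton has fewer than $\nB$ states, so $\AB$ is minimal.

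I do not anticipate a real obstacle here; the only point worth double-checking is the rank identity $\rank(\Fh \Bh) = \rank(\Bh)$ when $\Fh$ has full column rank, which is standard linear algebra (the columns of $\Fh \Bh$ are $\Fh$ applied to the columns of $\Bh$, and $v \mapsto \Fh v$ is injective).
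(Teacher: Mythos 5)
Your proof is correct and follows essentially the same route as the paper: both factor the Hankel matrix as $\Fh\Bh$, use forward-minimality to get that $\Fh$ has full column rank, and conclude $\rank(\sem{\A}) = \rank(\Fh\Bh) = \nB$, invoking \cref{prop:direction-1} for the lower bound. The only cosmetic difference is that you compute $\rank(\Bh) = \nB$ directly, whereas the paper routes the same computation through the finite basis matrix $B$ via $\rank(B) = \rank(\Fh B) = \rank(\Fh\Bh)$.
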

\begin{proof}
By symmetry, it suffices to show the first statement.
Let $\A = (n, \Sigma, M, \alpha, \eta)$ be forward-minimal.
Let $B \in \K^{n \times \nB}$ be a matrix whose columns form a basis of the backward space of~$\A$.
By \cref{prop:direction-1} it suffices to show that $\nB = \rank(H)$, where $H$ is the Hankel matrix of~$\sem{\A}$.
Let $\Fh$ and $\Bh$ be the matrices from the proof of \cref{prop:direction-1}.
Since $\A$ is forward-minimal, the columns of~$\Fh$ are linearly independent.
We have:
\begin{align*}
&\ \nB \\ 
= &\ \rank(B) && \text{definition of~$B$} \\
= &\ \rank(\Fh B) && \text{the columns of~$\Fh$ are linearly independent} \\
= &\, \dim\, \spa{\Fh M(w) \eta \mid w \in \Sigma^*} && \text{definition of~$B$} \\
= &\ \rank(\Fh \Bh) && \text{definition of~$\Bh$} \\
= &\ \rank(H) && \text{proof of \cref{prop:direction-1}} \qedhere
\end{align*}
\end{proof}

\begin{theorem} \label{thm:minimization}
Let $\A$ be an automaton.
Let $\A'$ be a backward conjugate of a forward conjugate of~$\A$ (or a forward conjugate of a backward conjugate of~$\A$).
Then $\A'$ is minimal and equivalent to~$\A$.
It can be computed in polynomial time (with $O(|\Sigma| n^3)$ arithmetic operations).
\end{theorem}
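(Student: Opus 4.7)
The plan is to assemble the theorem from the three preceding propositions: \cref{prop:equivalence} (which says conjugation preserves semantics), \cref{prop:forw-conj-is-forw-min} (which says forward/backward conjugates are forward/backward-minimal), and \cref{prop:new-minimal} (which says a backward conjugate of a forward-minimal automaton is minimal, and symmetrically). By symmetry it suffices to treat the case $\A' = (\AF)^{\leftarrow}$, i.e.\ a backward conjugate of a forward conjugate of~$\A$.

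For equivalence, I would apply \cref{prop:equivalence} twice: first $\sem{\A} = \sem{\AF}$, then, viewing $\AF$ as an automaton in its own right and taking its backward conjugate, $\sem{\AF} = \sem{(\AF)^{\leftarrow}} = \sem{\A'}$. Chaining these gives $\sem{\A} = \sem{\A'}$.

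For minimality, I would first invoke \cref{prop:forw-conj-is-forw-min} to conclude that $\AF$ is forward-minimal. Then, since $\A'$ is by construction a backward conjugate of the forward-minimal automaton~$\AF$, \cref{prop:new-minimal} immediately yields that $\A'$ is minimal. The symmetric case (forward conjugate of a backward conjugate) goes through by the symmetric halves of the same two propositions.

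For the complexity bound, I would apply \cref{prop:compute-conjugate} twice. The first application computes $\AF$ from $\A$ using $O(|\Sigma| n^3)$ arithmetic operations and produces an automaton with $\nF \le n$ states. The second application computes the backward conjugate of $\AF$ using $O(|\Sigma| \nF^3) = O(|\Sigma| n^3)$ arithmetic operations. Summing gives the claimed $O(|\Sigma| n^3)$ bound. There is no real obstacle here: every ingredient has already been proved in the preceding propositions, and the only thing to verify is that the composition of the two constructions does not blow up the state count, which is immediate since $\nF \le n$ (and symmetrically $\nB \le n$).
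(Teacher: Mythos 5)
Your proposal is correct and follows the paper's own proof exactly: minimality via \cref{prop:forw-conj-is-forw-min} and \cref{prop:new-minimal}, equivalence by applying \cref{prop:equivalence} twice, and the complexity bound by invoking \cref{prop:compute-conjugate} twice. The extra detail you supply (chaining the two semantic equalities, noting $\nF \le n$ so the second conjugation stays within the bound) is just a fleshed-out version of the same argument.
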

\begin{proof}
Minimality follows from \cref{prop:forw-conj-is-forw-min,prop:new-minimal}.
Equivalence follows from \cref{prop:equivalence}.
Polynomial-time computability follows by invoking \cref{prop:compute-conjugate} twice.
\end{proof}

Let $\A_i = (n, \Sigma, M_i, \alpha_i, \eta_i)$ for $i \in \{1,2\}$ be minimal, where $\A_2$ is the forward conjugate of~$\A_1$ with some base $Q \in \K^{n \times n}$.
By minimality and \cref{prop:equivalence}, matrix~$Q$ is invertible.
Since
\[
 \alpha_2 Q = \alpha_1, \quad \eta_2 = Q \eta_1, \quad Q M_1(a) = M_2(a) Q \quad\text{for all } a \in \Sigma\,,
\]
automaton~$\A_1$ is the backward conjugate of~$\A_2$ with base~$Q$.
Since
\[
\alpha_2 = \alpha_1 Q^{-1}, \quad Q^{-1} \eta_2 = \eta_1, \quad M_1(a) Q^{-1} = Q^{-1} M_2(a) \quad\text{for all } a \in \Sigma\,,
\]
automaton~$\A_1$ is the forward conjugate of~$\A_2$ with base~$Q^{-1}$, and $\A_2$ is the backward conjugate of~$\A_1$ with base~$Q^{-1}$.

This motivates the following definition.
Call minimal automata $\A_1, \A_2$ \emph{conjugate} if one is a forward (equivalently, backward) conjugate of the other.

\begin{theorem} \label{thm:conjugate-new}
Two minimal automata are conjugate if and only if they are equivalent.
\end{theorem}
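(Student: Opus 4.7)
The ``if'' direction is immediate from Proposition~\ref{prop:equivalence}, so I focus on ``only if''. Let $\A_1, \A_2$ be equivalent minimal automata. Applying minimality in both directions, the two share the same number of states $n$; moreover, combining Proposition~\ref{prop:direction-1} with Theorem~\ref{thm:minimization} forces $n = \rank(\sem{\A_1}) = \rank(\sem{\A_2})$, and the inequality chain $\rank(\sem{\A}) \le \nF \le n$ from the proof of Proposition~\ref{prop:direction-1} (and its symmetric counterpart for $\nB$) then collapses, so both $\A_i$ are simultaneously forward-minimal and backward-minimal.

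Next I would construct an invertible matrix $Q \in \K^{n \times n}$ satisfying $\alpha_2 Q = \alpha_1$, $Q M_1(a) = M_2(a) Q$ for all $a \in \Sigma$, and $Q \eta_1 = \eta_2$; by the discussion preceding the theorem this is exactly the data making $\A_1$ the forward conjugate of $\A_2$ with base $Q$. Since the Hankel matrix of $\sem{\A_1} = \sem{\A_2}$ has rank $n$, I can select words $w_1, \dots, w_n$ and $v_1, \dots, v_n$ making the $n \times n$ matrix $H'$ with $H'[i,j] = \sem{\A_1}(w_i v_j)$ invertible. Let $R_k \in \K^{n \times n}$ have rows $\alpha_k M_k(w_i)$ and $C_k \in \K^{n \times n}$ have columns $M_k(v_j) \eta_k$, so that $R_k C_k = H'$; invertibility of $H'$ forces $R_k$ and $C_k$ to be invertible. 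I then define $Q := C_2 C_1^{-1}$.

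The verification hinges on the key identity $\alpha_1 M_1(w) = \alpha_2 M_2(w) Q$ for every $w \in \Sigma^*$: the $j$-th entry of $(\alpha_k M_k(w)) C_k$ equals $\sem{\A_k}(w v_j)$, which is independent of~$k$ by equivalence, so $(\alpha_1 M_1(w)) C_1 = (\alpha_2 M_2(w)) C_2$, and right-multiplying by $C_1^{-1}$ yields the identity. Setting $w = \varepsilon$ gives $\alpha_2 Q = \alpha_1$. Applying the identity at $wa$ and also at $w$ (then right-multiplied by $M_1(a)$) yields $\alpha_2 M_2(w) \bigl(M_2(a) Q - Q M_1(a)\bigr) = 0$ for every $w$; by forward-minimality of $\A_2$ the vectors $\{\alpha_2 M_2(w)\}$ span $\K^n$, forcing $M_2(a) Q = Q M_1(a)$. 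The same spanning argument applied to $\alpha_2 M_2(w) Q \eta_1 = \alpha_1 M_1(w) \eta_1 = \sem{\A_2}(w) = \alpha_2 M_2(w) \eta_2$ delivers $Q \eta_1 = \eta_2$, while invertibility of $Q$ is immediate from its definition as a product of invertible matrices.

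Conceptually the main obstacle is choosing $Q$: the two factorisations $R_k C_k = H'$ exhibit two coordinatisations of the same rank-$n$ Hankel object, and $Q$ is the change of coordinates between them. Once $Q$ is in hand, all three conjugacy equations fall out of a single identity via the spanning argument based on forward-minimality of~$\A_2$.
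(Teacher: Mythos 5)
Your proof is correct and follows essentially the same route as the paper: your $Q = C_2 C_1^{-1}$ is precisely the paper's change-of-basis matrix between the two rank factorizations of the Hankel matrix (the paper defines it abstractly via $\Fh_1 = \Fh_2 Q$ for the infinite matrices $\Fh_i[w,\cdot] = \alpha_i M_i(w)$ and cancels full-rank factors, whereas you realize it concretely through an invertible $n \times n$ submatrix $H' = R_k C_k$ and replace the cancellation by the equivalent spanning argument). One small slip: the three equations $\alpha_2 Q = \alpha_1$, $Q M_1(a) = M_2(a) Q$, $Q\eta_1 = \eta_2$ say that $\A_2$ is the \emph{forward} conjugate of $\A_1$ with base $Q$ (equivalently, $\A_1$ is the \emph{backward} conjugate of $\A_2$), not that $\A_1$ is the forward conjugate of $\A_2$ --- harmless here since the paper's preceding discussion makes the conjugacy relation symmetric, but worth stating correctly.
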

\begin{proof}
The forward direction follows from \cref{prop:equivalence}.

Towards the backward direction, let $\A_i = (n, \Sigma, M_i, \alpha_i, \eta_i)$ for $i \in \{1,2\}$ be minimal equivalent automata.
For $i \in \{1, 2\}$, consider the matrices $\Fh_i: \K^{\Sigma^* \times n}$ and $\Bh_i: \K^{n \times \Sigma^*}$ with $\Fh_i[w,\cdot] = \alpha_i M_i(w)$ and $\Bh_i[\cdot,w] = M_i(w) \eta_i$ for all $w \in \Sigma^*$.
It follows from minimality and \cref{prop:equivalence} that $\Fh_i$ and~$\Bh_i$ have full rank~$n$.
Thus, there is an invertible matrix~$Q \in \K^{n \times n}$ with $\Fh_1 = \Fh_2 Q$.
We show that $\A_2$ is the forward conjugate of~$\A_1$ with base~$Q$.

We have $\alpha_1 = \Fh_1[\varepsilon,\cdot] = \Fh_2[\varepsilon,\cdot] Q = \alpha_2 Q$.
Letting $H$ denote the Hankel matrix of $\sem{\A_1} = \sem{\A_2}$, we have $\Fh_2 Q \Bh_1 = \Fh_1 \Bh_1 = H = \Fh_2 \Bh_2$.
Since $\Fh_2$ has full rank, it follows that $Q \Bh_1 = \Bh_2$.
In particular, $Q \eta_1 = Q \Bh_1[\cdot, \varepsilon] = \Bh_2[\cdot, \varepsilon] = \eta_2$.

For any $a \in \Sigma$ let $H_a \in \K^{\Sigma^* \times \Sigma^*}$ be the matrix with $H_a[x,y] = \sem{\A_i}(x a y) $ for all $x, y \in \Sigma^*$.
We have $\Fh_i M_i(a) \Bh_i = H_a$ for all $a \in \Sigma$.
Thus, for all $a \in \Sigma$ we have:
\[ \Fh_2 Q M_1(a) \Bh_1 \ = \ \Fh_1 M_1(a) \Bh_1 \ = \ H_a \ = \ \Fh_2 M_2(a) \Bh_2 \ = \ \Fh_2 M_2(a) Q \Bh_1
\]
Since $\Fh_2$ and~$\Bh_1$ have full rank, we have $Q M_1(a) = M_2(a) Q$ for all $a \in \Sigma$.
\end{proof}

\br

Minimization is closely related to equivalence and also goes back to~\cite{IC::Schutzenberger1961}.
The book~\cite[Chapter~II]{BerstelReutenauer88} describes a minimization procedure.
An $O(|\Sigma| n^4)$ minimization algorithm (for a related probabilistic model) was suggested in~\cite{GillmanS94}.
The algorithm given in this note reminds of Brzozowski's algorithm for minimizing DFAs~\cite{Brzozowski62}.
The succinct formulation in this note is essentially from~\cite{14KW-ICALP}.
Further generalizations of Brzozowski's algorithm are discussed in~\cite{BonchiBrzozowski}.

\Cref{thm:conjugate-new} also goes back to \cite{IC::Schutzenberger1961}.
See also \cite{fliess} and \cite[Chapter~II]{BerstelReutenauer88}.

\section{The Hankel Automaton} \label{sec-Hankel-aut}

Let $s : \Sigma^* \to \K$ be a series of rank~$n$, with Hankel matrix~$H$.
Call a set $C = \{c_1, \ldots, c_n\} \subseteq \Sigma^*$ \emph{complete} if the columns of~$H[\cdot, C]$ form a basis of the column space of~$H$.

Note that for any $G \subseteq \Sigma^*$ and $w \in \Sigma^*$ we have $H[G w, C] = H[G, w C]$, where $G w := \{g w \mid g \in G\}$ and $w C := \{w c \mid c \in C\}$.

Let $C = \{c_1, \ldots, c_n\} \subseteq \Sigma^*$ be complete.
Then, for any $w \in \Sigma^*$ there is a unique column vector $\eta_w \in \K^{n}$ with $H[\cdot,w] = H[\cdot,C] \eta_w$, and for all $a \in \Sigma$ a unique matrix~$\MC(a) \in \K^{n \times n}$ with $H[\cdot,C] \MC(a) = H[\cdot, a C]$.
We define the \emph{Hankel automaton} for $s,C$ as $\AC = (n, \Sigma, \MC, H[\varepsilon,C], \eta_\varepsilon)$.

\begin{proposition} \label{prop:canonical-aut-monoid}
Let $\AC = (n, \Sigma, \MC, H[\varepsilon,C], \eta_\varepsilon)$ be the Hankel automaton for~$s,C$.
Then for all $w \in \Sigma^*$ we have $H[\cdot,C] \MC(w) =  H[\cdot, w C]$.
Hence, if $G = \{g_1, \ldots, g_n\} \subseteq \Sigma^*$ is such that $H[G,C]$ has full rank, we have $\MC(w) = H[G,C]^{-1} H[G, w C]$ for all $w \in \Sigma^*$.
\end{proposition}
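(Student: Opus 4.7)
The plan is a straightforward induction on $|w|$. The base case $w = \varepsilon$ is immediate: since $\MC$ is a monoid homomorphism, $\MC(\varepsilon) = I_n$, and $\varepsilon C = C$, so both sides equal $H[\cdot, C]$.

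For the inductive step, I would consider $wa$ with $a \in \Sigma$. Using that $\MC(wa) = \MC(w) \MC(a)$ and the induction hypothesis, I get
\[
H[\cdot, C] \MC(wa) \;=\; H[\cdot, C] \MC(w) \MC(a) \;=\; H[\cdot, wC] \MC(a),
\]
so it suffices to establish $H[\cdot, wC] \MC(a) = H[\cdot, waC]$. The key tool here is the shift identity remarked just before the definition of the Hankel automaton: for any $x \in \Sigma^*$ and $D \subseteq \Sigma^*$, $H[x, wD] = H[xw, D]$, since both entries in column $d$ equal $s(xwd)$. Specialising $D = C$ reduces a row of $H[\cdot, wC] \MC(a)$ at index~$x$ to $H[xw, C] \MC(a)$, which by the defining property of~$\MC(a)$ equals $H[xw, aC]$; specialising $D = aC$ rewrites this as $H[x, waC]$, which is what is needed.

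For the second claim, if $H[G, C] \in \K^{n \times n}$ has full rank it is invertible, and restricting the just-established identity $H[\cdot, C] \MC(w) = H[\cdot, wC]$ to the rows indexed by~$G$ yields $H[G, C] \MC(w) = H[G, wC]$; left-multiplying by $H[G, C]^{-1}$ gives the stated formula. I do not anticipate any genuine obstacle; the only thing to be careful about is the bookkeeping of row and column indices of the infinite matrix~$H$, and in particular the correct use of the shift identity $H[x, wD] = H[xw, D]$ which bridges the ``post-composition by~$\MC(a)$'' and ``prepending~$a$ to the columns'' operations.
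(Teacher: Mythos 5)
Your proof is correct and follows essentially the same route as the paper: induction on $|w|$, applying the induction hypothesis, then the shift identity $H[\cdot\,w,C]=H[\cdot,wC]$ to move the word $w$ from the column index to the row index, the defining property of $\MC(a)$ on the shifted rows, and the shift identity again with $aC$ in place of $C$. The derivation of the second claim by restricting to the rows indexed by $G$ and left-multiplying by $H[G,C]^{-1}$ is the intended (and only sensible) reading of the paper's ``Hence''.
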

\begin{proof}
We proceed by induction on the length of~$w$.
The induction base ($w = \varepsilon$) is trivial.
For the step, let $w \in \Sigma^*$ and $a \in \Sigma$.
We have:
\begin{align*}
H[\cdot,C] \MC(w a)
\ &=\ H[\cdot,w C] \MC(a) && \text{induction hypothesis} \\
\ &=\ H[\cdot\, w, C] \MC(a) \\
\ &=\ H[\cdot\, w, a C] && \text{definition of~$\MC(a)$} \\
\ &=\ H[\cdot, w a C] && \qedhere
\end{align*}
\end{proof}

\begin{proposition} \label{prop:canonical-main}
Let $\AC = (n, \Sigma, \MC, H[\varepsilon,C], \eta_\varepsilon)$ be the Hankel automaton for~$s,C$.
Then $\sem{\AC} = s$ and $\AC$ is minimal.
\end{proposition}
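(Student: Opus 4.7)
The plan is to handle the two claims in sequence, each as a short reduction to a proposition already in hand.

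For $\sem{\AC} = s$, I would fix an arbitrary $w \in \Sigma^*$ and unfold the semantics of the Hankel automaton:
$\sem{\AC}(w) = H[\varepsilon, C]\, \MC(w)\, \eta_\varepsilon$. The first step is to apply \cref{prop:canonical-aut-monoid}, which states $H[\cdot, C]\,\MC(w) = H[\cdot, w C]$; restricting to the row indexed by $\varepsilon$ yields $H[\varepsilon, C]\,\MC(w) = H[\varepsilon, w C]$. A direct index computation using $H[x, y z] = s(x y z) = H[x y, z]$ identifies the row $H[\varepsilon, w C]$ with $H[w, C]$. Finally, the defining property $H[\cdot, \varepsilon] = H[\cdot, C]\,\eta_\varepsilon$ of $\eta_\varepsilon$, restricted to the $w$-row, gives $H[w, C]\,\eta_\varepsilon = H[w, \varepsilon] = s(w)$. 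Chaining these identities delivers $\sem{\AC}(w) = s(w)$.

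For minimality, I would invoke \cref{prop:direction-1}. Having just shown $\sem{\AC} = s$ and knowing by hypothesis that $\rank(s) = n$, any automaton $\A'$ equivalent to $\AC$ computes a series of rank $n$, so by \cref{prop:direction-1} it must have at least $n$ states. Since $\AC$ has exactly $n$ states, it is minimal.

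I do not expect a genuine obstacle; the only thing to be careful about is notational, namely keeping straight the two distinct defining identities $H[\cdot, C]\,\MC(w) = H[\cdot, w C]$ and $H[\cdot, \varepsilon] = H[\cdot, C]\,\eta_\varepsilon$, and applying them to the right row of the infinite Hankel matrix. No further ideas beyond those already encapsulated in \cref{prop:canonical-aut-monoid} and \cref{prop:direction-1} should be needed.
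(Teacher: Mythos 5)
Your proposal is correct and follows essentially the same route as the paper: the identical chain of equalities $H[\varepsilon,C]\MC(w)\eta_\varepsilon = H[\varepsilon,wC]\eta_\varepsilon = H[w,C]\eta_\varepsilon = H[w,\varepsilon] = s(w)$ via \cref{prop:canonical-aut-monoid} and the definition of $\eta_\varepsilon$, with minimality from \cref{prop:direction-1}. You merely spell out the rank-$n$ lower bound that the paper leaves implicit.
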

\begin{proof}
We have for all $w \in \Sigma^*$:
\begin{align*}
\sem{\AC}(w)
\ &=\ H[\varepsilon,C] \MC(w) \eta_\varepsilon \\
\ &=\ H[\varepsilon, w C] \eta_\varepsilon && \text{\cref{prop:canonical-aut-monoid}} \\
\ &=\ H[w,C] \eta_\varepsilon \\
\ &=\ H[w,\varepsilon] && \text{definition of~$\eta_\varepsilon$} \\
\ &=\ s(w)
\end{align*}
Minimality follows from \cref{prop:direction-1}.
\end{proof}

\begin{theorem} \label{thm:WA-are-complete}
Let $s : \Sigma^* \to \K$ be a series and $n \in \N$.
Then $\rank(s) \le n$ if and only if there is an automaton~$\A$ with $\sem{\A} = s$ that has at most $n$ states.
\end{theorem}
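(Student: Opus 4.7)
The plan is to prove both directions using results already established in the excerpt; neither direction should require new machinery.

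For the backward direction, I would simply invoke \cref{prop:direction-1}: if there is an automaton $\A$ with at most $n$ states and $\sem\A = s$, then $\rank(s) = \rank(\sem\A) \le n$.

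For the forward direction, suppose $\rank(s) = r \le n$. The goal is to exhibit an automaton with at most $n$ states whose semantics is~$s$. I would build this automaton using the Hankel automaton construction of this section. If $r = 0$, the series~$s$ is identically zero and the automaton with $0$ states computes it. Otherwise, the column space of the Hankel matrix~$H$ of~$s$ has dimension~$r$, so we may pick $r$ words $c_1, \ldots, c_r \in \Sigma^*$ such that the columns $H[\cdot, c_1], \ldots, H[\cdot, c_r]$ form a basis of the column space of~$H$; that is, $C = \{c_1, \ldots, c_r\}$ is complete. The Hankel automaton $\AC$ for $s, C$ is then well-defined and has exactly $r \le n$ states. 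By \cref{prop:canonical-main}, $\sem{\AC} = s$, completing the proof.

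There is no real obstacle here; the only minor point is making sure a complete set exists whenever $\rank(s)$ is finite, which follows immediately from picking a basis among the columns of~$H$.
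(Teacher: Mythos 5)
Your proof is correct and takes essentially the same route as the paper, which simply cites \cref{prop:direction-1} for the backward direction and \cref{prop:canonical-main} for the forward direction. Your additional care in exhibiting a complete set~$C$ (and handling $r=0$) just spells out details the paper leaves implicit.
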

\begin{proof}
Follows from \cref{prop:canonical-main,{prop:direction-1}}.
\end{proof}


The following proposition uses some notions from \cref{sec:minimization}.

\begin{proposition} \label{prop:canonical-conjugate}
Let $\A = (n, \Sigma, M, \alpha, \eta)$ be forward-minimal.
Let $C = \{c_1, \ldots, c_r\} \subseteq \Sigma^*$ be such that the columns of the matrix $B := (M(c_1) \eta, \ldots, M(c_r) \eta) \in \K^{n \times r}$ form a basis of the backward space of~$\A$.
Then $C$ is complete, and the backward conjugate of~$\A$ with base~$B$ is the Hankel automaton for~$\sem{\A},C$.
\end{proposition}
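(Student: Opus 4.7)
My plan is to exploit the factorization $H = \Fh \Bh$ introduced in the proof of \cref{prop:direction-1}, where $\Fh[w,\cdot] = \alpha M(w)$ and $\Bh[\cdot,w] = M(w)\eta$. The key observation is that the matrix $B$ from the statement is exactly $\Bh[\cdot,C]$, so $H[\cdot,C] = \Fh \Bh[\cdot,C] = \Fh B$. This lets me translate between the ``automaton side'' (involving $\A$, $B$, $M$, $\eta$, $\alpha$) and the ``Hankel side'' (involving $H[\cdot,C]$, $H[\cdot,aC]$, $H[\cdot,\varepsilon]$, $H[\varepsilon,C]$) simply by multiplying by $\Fh$ on the left.

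First I would prove that $C$ is complete. Since $\A$ is forward-minimal, $\Fh$ has full column rank $n$, and by hypothesis $B$ has rank $r$, so $H[\cdot,C] = \Fh B$ has rank $r$; thus the $r$ columns of $H[\cdot,C]$ are linearly independent. On the other hand, by the rank computation in the proof of \cref{prop:new-minimal} (or by combining \cref{prop:new-minimal} with \cref{thm:WA-are-complete}), $\rank(H) = r$. Hence the columns of $H[\cdot,C]$ form a basis of the column space of $H$.

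Next I would check that the backward conjugate $\AB = (r,\Sigma,\MB,\alpha B,\etaB)$ coincides with the Hankel automaton $\AC = (r,\Sigma,\MC,H[\varepsilon,C],\eta_\varepsilon)$ component by component. The initial vector is immediate: $\alpha B = (\alpha M(c_1)\eta,\ldots,\alpha M(c_r)\eta) = (s(c_1),\ldots,s(c_r)) = H[\varepsilon,C]$. For the final vector, starting from $B\etaB = \eta$ and multiplying by $\Fh$ on the left gives $H[\cdot,C]\etaB = \Fh \eta = H[\cdot,\varepsilon]$; by the uniqueness defining $\eta_\varepsilon$ (which is available because $H[\cdot,C]$ has full column rank, as just shown) we conclude $\etaB = \eta_\varepsilon$. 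For the transition matrices, starting from $M(a)B = B\MB(a)$ and multiplying by $\Fh$ on the left yields $\Fh M(a) B \,\MB(a)^{\text{--absorbed}}$... more carefully, $\Fh M(a) B = \Fh B \,\MB(a) = H[\cdot,C]\MB(a)$, while a direct entrywise computation gives $(\Fh M(a) B)[w,c_i] = \alpha M(wac_i)\eta = H[w,ac_i]$, so $H[\cdot,C]\MB(a) = H[\cdot,aC]$; uniqueness of $\MC(a)$ yields $\MB(a) = \MC(a)$.

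The only real subtlety is keeping track of which matrix must have full rank to justify each uniqueness/cancellation step; everything else is a mechanical translation through the identity $H[\cdot,C] = \Fh B$. I expect no serious obstacle beyond bookkeeping.
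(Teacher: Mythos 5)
Your proposal is correct and follows essentially the same route as the paper: both hinge on the identity $H[\cdot,C] = \Fh B$ and on cancelling the full-column-rank matrix $\Fh$ (guaranteed by forward-minimality) to pass between the automaton side and the Hankel side; verifying the backward conjugate's defining equations against the Hankel automaton's, as you do, is the mirror image of the paper's verification and equally valid. The only cosmetic difference is in the completeness step, where you obtain $\rank(H)=r$ from \cref{prop:new-minimal} while the paper argues directly that the column spaces of $H=\Fh\Bh$ and $H[\cdot,C]=\Fh B$ coincide because those of $\Bh$ and $B$ do; both are sound.
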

\begin{proof}
Let $H$ be the Hankel matrix of~$\sem{\A}$.
Let $\Fh: \K^{\Sigma^* \times n}$ and $\Bh: \K^{n \times \Sigma^*}$ be the matrices with $\Fh[w,\cdot] = \alpha M(w)$ and $\Bh[\cdot,w] = M(w) \eta$ for all $w \in \Sigma^*$.
We have $\Fh \Bh = H$ and $\Bh[\cdot,C] = B$, hence $\Fh B = H[\cdot,C]$.
Since the column spaces of $\Bh$ and~$B$ are equal, it follows that the column spaces of $H$ and~$H[\cdot,C]$ are equal.
Since $\A$ is forward-minimal, $\Fh$ has full rank.
Thus, $r = \rank(B) = \rank(\Fh B) = \rank(H[\cdot,C])$, so the columns of $H[\cdot,C]$ are linearly independent.
Hence, $C$ is complete.

Let $\AC = (r, \Sigma, \MC, H[\varepsilon,C], \eta_\varepsilon)$ be the Hankel automaton for~$\sem{\A},C$.
For all $x, w \in \Sigma^*$ and all $c \in C$ we have $\Fh[x,\cdot] M(w) B[\cdot,c] = \alpha M(x) M(w) M(c) \eta = \sem{\A}(x w c) = H[x, w c]$.  
Thus, we have for all $w \in \Sigma^*$:
\begin{align*}
 \Fh M(w) B
 \ &=\ H[\cdot, w C] \\
 \ &=\ H[\cdot, C] \MC(w) && \text{\cref{prop:canonical-aut-monoid}} \\
 \ &=\ \Fh B \MC(w)
\end{align*}
Since $\Fh$ has full rank, it follows that $M(w) B = B \MC(w)$ for all $w \in \Sigma^*$.
Similarly, we have $\Fh B \eta_\varepsilon = H[\cdot, C] \eta_\varepsilon = H[\cdot, \varepsilon] = \Fh \eta$, and since $\Fh$ has full rank, it follows that $B \eta_\varepsilon = \eta$.
Finally, we have $\alpha B = \Fh[\varepsilon,\cdot] B = H[\varepsilon,C]$.
We conclude that $\AC$ is the backward conjugate of~$\A$ with base~$B$.
\end{proof}

\begin{theorem} \label{thm:compute-conjugate-Hankel}
Let $\A = (n, \Sigma, M, \alpha, \eta)$ be an automaton.
One can compute in polynomial time (with $O(|\Sigma| n^3)$ arithmetic operations) a complete (for~$\sem{\A}$) set $C = \{c_1, \ldots, c_r\} \subseteq \Sigma^{\le r-1}$ with $r \le n$, and the Hankel automaton for $\sem{\A}, C$.
\end{theorem}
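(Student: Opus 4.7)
The plan is to chain \cref{prop:compute-conjugate} with the backward analogue of \cref{prop:compute-forward-intermediate}, and then invoke \cref{prop:canonical-conjugate} to identify the output as the desired Hankel automaton.

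First, I would apply \cref{prop:compute-conjugate} to compute, in $O(|\Sigma| n^3)$ operations, the forward conjugate $\AF = (\nF, \Sigma, \MF, \alphaF, F \eta)$ of $\A$ with some base~$F$, where $\nF \le n$. By \cref{prop:forw-conj-is-forw-min}, $\AF$ is forward-minimal, and by \cref{prop:equivalence}, $\sem{\AF} = \sem{\A}$.

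Second, I would run the mirror image of \cref{alg:forward-space-intermediate} on $\AF$---starting from the final vector $F \eta$ and repeatedly pre-multiplying by $\MF(a)$, rather than post-multiplying $\alpha$ by $M(a)$---to produce a set $C = \{c_1, \ldots, c_r\} \subseteq \Sigma^{\le r-1}$, with $r \le \nF \le n$, such that the columns of $B := (\MF(c_1) F \eta, \ldots, \MF(c_r) F \eta)$ form a basis of the backward space of~$\AF$. The complexity analysis leading up to \cref{prop:compute-forward-intermediate} carries over verbatim and yields $O(|\Sigma| \nF^3) \subseteq O(|\Sigma| n^3)$ operations; the length bound $|c_i| \le r-1$ comes directly from the breadth-first structure of the worklist, which extends a word only when the resulting vector enlarges the running span, a span whose dimension is bounded by~$r$.

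Third, I would apply \cref{prop:canonical-conjugate} with $\AF$ in the role of $\A$ and $B$ as above. Its hypotheses hold by construction, so $C$ is complete for $\sem{\AF} = \sem{\A}$ and the backward conjugate of $\AF$ with base~$B$ is exactly the Hankel automaton for $\sem{\A}, C$. Computing this backward conjugate takes a further $O(|\Sigma| \nF^3)$ operations by the backward version of \cref{prop:compute-conjugate}, and the three costs add up to $O(|\Sigma| n^3)$. There is no substantial obstacle here---the only point that deserves attention is the length bound $|c_i| \le r-1$, which, as noted, is immediate from the breadth-first structure of the backward algorithm.
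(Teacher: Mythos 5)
Your proposal is correct and follows essentially the same route as the paper's proof: compute a forward-minimal equivalent automaton via \cref{prop:compute-conjugate,prop:forw-conj-is-forw-min,prop:equivalence}, extract $C$ and the basis matrix $B$ of its backward space via the backward analogue of \cref{prop:compute-forward-intermediate}, and then identify the backward conjugate with base~$B$ as the Hankel automaton via \cref{prop:canonical-conjugate}. The cost accounting and the length bound $|c_i| \le r-1$ are handled just as in the paper.
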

\begin{proof}
Using \cref{prop:compute-conjugate,prop:equivalence,prop:forw-conj-is-forw-min}, first compute in the required time a forward-minimal automaton~$\AF = (\nF, \Sigma, \MF, \alphaF, \etaF)$ with $\sem{\AF} = \sem{\A}$.
By the backward analogue of \cref{prop:compute-forward-intermediate} one can compute in the required time a set $C = \{c_1, \ldots, c_r\} \subseteq \Sigma^{\le r-1}$ and the matrix $B = (\MF(c_1) \etaF, \ldots, \MF(c_r) \etaF) \in \K^{\nF \times r}$ such that the columns of~$B$ form a basis of the backward space of~$\AF$.
Let $\A'$ be the backward conjugate of~$\AF$ with base~$B$.
By \cref{prop:compute-conjugate}, it can be computed in the required time.
By \cref{prop:canonical-conjugate}, $\A'$ is the Hankel automaton for $\sem{\A}, C$.
\end{proof}

\br

The material in this section, at least up to \cref{thm:WA-are-complete}, is similar to \cite[Section~2]{carlyle1971realizations} and \cite{fliess}.
See also \cite[Theorem~5.3]{MandelSimon77} and \cite[Section~2]{five} for related treatments.

\section{Computations in \NC}

We show that some of the mentioned polynomial-time computations can even be carried out in the complexity class~\NC, which comprises those languages having L-uniform Boolean circuits of polylogarithmic depth and polynomial size, or, equivalently,
those problems solvable in polylogarithmic time on parallel random-access machines with polynomially many processors. We have $\text{NL} \subseteq \text{\NC} \subseteq \text{P}$.

\begin{lemma} \label{lem: AtA}
Let $A \in \K^{m \times n}$.
The row spaces of $A$ and $A^T A$ are equal.
\end{lemma}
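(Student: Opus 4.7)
The plan is to establish the two inclusions between row spaces separately. The easy direction, that the row space of $A^T A$ is contained in the row space of $A$, follows at once from writing $A^T A = A^T \cdot A$: each row of $A^T A$ is obtained by left-multiplying $A$ by a row of $A^T$, hence is a linear combination of rows of $A$ (with the coefficients forming a column of~$A$).

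For the reverse inclusion I plan to pass through null spaces. Over $\Q$ (the ground field in this section), the row space of any matrix $M \in \K^{m \times n}$ coincides with the orthogonal complement of $\ker(M) \subseteq \K^n$ under the standard bilinear form, by rank-nullity together with the obvious orthogonality between rows and null-space vectors. So it suffices to prove $\ker(A) = \ker(A^T A)$, and the nontrivial inclusion is $\ker(A^T A) \subseteq \ker(A)$.

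For this final step the plan is: if $A^T A x = 0$, left-multiply by $x^T$ to get $0 = x^T A^T A x = (Ax)^T (Ax) = \sum_i (Ax)_i^2$. Since $\K = \Q \subseteq \R$ in this section, a vanishing sum of rational squares forces each summand, and thus $Ax$ itself, to be zero.

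The conceptual obstacle is precisely this last step: positive-definiteness of the standard bilinear form on $\Q^n$ is the only ingredient that goes beyond linear algebra over an arbitrary field. It is worth flagging that the lemma is genuinely false over a field like $\mathbb{F}_2$ (take $A = \bigl(\begin{smallmatrix} 1 \\ 1 \end{smallmatrix}\bigr)$, so $A^T A = 0$), so the implicit convention $\K = \Q$ adopted at the start of the paper is not a mere convenience here but is actually being used.
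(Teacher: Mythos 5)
Your proof is correct and follows essentially the same route as the paper's: reduce the nontrivial inclusion to $\ker(A^T A) \subseteq \ker(A)$ and then observe that $A^T A x = 0$ gives $(Ax)^T(Ax) = x^T A^T A x = 0$, forcing $Ax = 0$ by positive definiteness over~$\Q$. Your added remark that the statement genuinely fails over fields such as $\mathbb{F}_2$ (the paper only announces the standing assumption $\K = \Q$ \emph{after} this lemma) is a worthwhile observation, but the argument itself is the same.
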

\begin{proof}
It is clear that the row space of $A^T A$ is included in the row space of~$A$.
For the converse, it suffices to show that the null space of $A^T A$ is included in the null space of~$A$.
Let $x \in \K^n$ with $A^T A x = 0_n$, where $0_n$ denotes the zero vector.
Then $(A x)^T (A x) = x^T A^T A x = x^T 0_n = 0$, and hence $A x = 0_m$.
\end{proof}

In the following we assume $\K = \Q$.

\begin{proposition}[cf.~\cref{prop:compute-forward-intermediate}] \label{prop:compute-forward-intermediate-NC}
Let $\A = (n, \Sigma, M, \alpha, \eta)$ be an automaton.
One can compute in \NC\ a basis of the forward space $\FV := \spa{\alpha M(w) \mid w \in \Sigma^*}$.
\end{proposition}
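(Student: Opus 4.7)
The plan is to express $\FV$ as the row space of a single polynomial-size matrix that is itself computable in NC, and then invoke standard NC subroutines from linear algebra to extract a basis.

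Since $\nF \le n$, we have $\FV = \spa{\alpha M(w) \mid w \in \Sigma^{\le n-1}}$, so $\FV$ is the row space of the (exponentially tall) matrix $N$ whose rows are the vectors $\alpha M(w)$ for $w \in \Sigma^{\le n-1}$. By \cref{lem: AtA}, $\FV$ equals the row space of the polynomial-size Gram matrix
\[
N^T N \ = \ \sum_{w \in \Sigma^{\le n-1}} M(w)^T \alpha^T \alpha M(w) \ \in \ \K^{n \times n}.
\]
It therefore suffices to compute $N^T N$ in NC and then extract a basis of its row space in NC.

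To compute $N^T N$ in NC, I would rewrite the sum as $N^T N = \sum_{k=0}^{n-1} f^k(\alpha^T \alpha)$, where $f : \K^{n \times n} \to \K^{n \times n}$ is the linear map $X \mapsto \sum_{a \in \Sigma} M(a)^T X M(a)$. Vectorising, $f$ is represented by the matrix $T := \sum_{a \in \Sigma} M(a)^T \otimes M(a)^T \in \K^{n^2 \times n^2}$, and $\operatorname{vec}(N^T N) = (I + T + T^2 + \cdots + T^{n-1}) \operatorname{vec}(\alpha^T \alpha)$. The partial geometric sum can be assembled via the doubling identity $I + T + \cdots + T^{2^j - 1} = \prod_{i=0}^{j-1} (I + T^{2^i})$: repeated squaring produces $T^{2^i}$ for $i \le \lceil \log_2 n \rceil$ in $O(\log n)$ rounds of matrix multiplication, and a further $O(\log n)$ multiplications assemble the product. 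Each multiplication of $n^2 \times n^2$ rational matrices is in NC, so the whole computation of $N^T N$ is. Finally, a basis of the row space of the $n \times n$ matrix $N^T N$ is obtainable in NC by standard parallel algorithms for rank and echelon form.

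The main obstacle is spotting the right polynomial-size proxy for the exponentially tall matrix $N$. Once one exploits the dimension bound $\nF \le n$ to pass to the Gram matrix $N^T N$ via \cref{lem: AtA}, and observes that its defining sum is just the orbit of $\alpha^T \alpha$ under a single linear operator of polynomial rank, everything reduces to well-known NC primitives --- iterated matrix multiplication and row-space basis computation.
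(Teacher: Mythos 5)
Your proposal is correct and follows essentially the same route as the paper: pass from the exponentially tall matrix with rows $\alpha M(w)$, $w \in \Sigma^{\le n-1}$, to its $n \times n$ Gram matrix via \cref{lem: AtA}, compute that matrix in \NC\ using Kronecker products and powers of the $n^2 \times n^2$ matrix $\sum_{a \in \Sigma} M(a) \otimes M(a)$ (you work with the transposed/vectorised form, which is immaterial), and extract a basis of its row space with \NC\ rank computations. The only cosmetic difference is that the paper sums the first $n$ powers directly, whereas your doubling identity yields $\sum_{k=0}^{2^j-1} T^k$ for $2^j \ge n$; this is harmless, since the Gram matrix of the longer sum over $\Sigma^{\le 2^j-1}$ still has row space exactly~$\FV$.
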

\begin{proof}
Let $\FN \in \K^{\Sigma^{\le n-1} \times n}$ be the matrix with $\FN[w,\cdot] = \alpha M(w)$ for $w \in \Sigma^{\le n-1}$.
We have shown in \cref{sec:equivalence} that the rows of~$\FN$ span~$\FV$.
By \cref{lem: AtA}, the rows of~$E := \FN^T \FN \in \K^{n \times n}$ also span~$\FV$.

Let $e_i \in \{0,1\}^n$ denote the $i$th coordinate column vector, and let $\mathord{\otimes}$ denote the Kronecker product.
Using the mixed-product property of~$\mathord{\otimes}$ (i.e., $(A B) \otimes (C D) = (A \otimes C) (B \otimes D)$), we have:
\begin{align*}
E[i,j]
\ &=\ \FN^T[i,\cdot] \ \FN[\cdot,j] \\
\ &=\ \sum_{w \in \Sigma^{\le n-1}} (\alpha M(w))[i] \ (\alpha M(w))[j] \\
\ &=\ \sum_{w \in \Sigma^{\le n-1}} (\alpha M(w) e_i) \otimes (\alpha M(w) e_j)  \\
\ &=\ \sum_{w \in \Sigma^{\le n-1}} (\alpha \otimes \alpha) (M(w) \otimes M(w)) (e_i \otimes e_j) \\
\ &=\ (\alpha \otimes \alpha) \left(\sum_{w \in \Sigma^{\le n-1}} M(w) \otimes M(w)\right) (e_i \otimes e_j) \\
\ &=\ (\alpha \otimes \alpha) \left(\sum_{k=0}^{n-1} \sum_{w \in \Sigma^k} M(w) \otimes M(w)\right) (e_i \otimes e_j) \\
\ &=\ (\alpha \otimes \alpha) \left(\sum_{k=0}^{n-1} \sum_{a_1 \cdots a_k \in \Sigma^k} (M(a_1) \otimes M(a_1)) \cdots (M(a_k) \otimes M(a_k)) \right) \\ & \qquad (e_i \otimes e_j) \\
\ &=\ (\alpha \otimes \alpha) \left(\sum_{k=0}^{n-1} \Big(\sum_{a \in \Sigma} M(a) \otimes M(a)\Big)^k\right) (e_i \otimes e_j)
\end{align*}
Since Kronecker products, sums and matrix powers~\cite{Cook85} can be computed in~\NC, one can compute~$E$ in~\NC.
We include the $i$th row of~$E$ in the desired basis of~$\FV$ if and only if $\rank(E[\{1, \ldots, i\},\cdot]) > \rank(E[\{1, \ldots, i-1\},\cdot])$.
This can be done in~\NC, as the rank of a matrix can be determined in \NC~\cite{IbarraMR80}.
\end{proof}

\begin{proposition}[cf.~\cref{prop:zeroness-in-P}] \label{prop:zeroness-in-P-NC}
Let $\A = (n, \Sigma, M, \alpha, \eta)$ be an automaton.
One can check in \NC\ whether $\A$ is zero.
\end{proposition}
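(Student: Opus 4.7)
The plan is to replay the proof of \cref{prop:zeroness-in-P} but replace the call to \cref{prop:compute-forward-intermediate} with its \NC\ counterpart, \cref{prop:compute-forward-intermediate-NC}. Recall the key observation: $\A$ is zero if and only if $v \eta = 0$ for every $v$ in the forward space $\FV$, equivalently for every $v$ in some basis of~$\FV$.

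First I would invoke \cref{prop:compute-forward-intermediate-NC} to compute in~\NC\ a basis $\{v_1,\ldots,v_{\nF}\} \subseteq \K^n$ of~$\FV$. Next, in parallel for each $i \in \{1,\ldots,\nF\}$, compute the inner product $v_i \eta \in \K$; this is a single dot product of $n$-dimensional vectors, which lies in~\NC\ (in fact, in $\text{NC}^1$). Finally, output \emph{zero} if and only if all of these inner products vanish. Correctness is immediate from the characterization recalled above, since the $v_i$ span~$\FV$.

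Alternatively, one can shortcut the basis extraction step entirely: the matrix $E = F^T F$ constructed in the proof of \cref{prop:compute-forward-intermediate-NC} has rows that span~$\FV$, so $\A$ is zero iff $E \eta = 0_n$, and this matrix-vector product is computable in~\NC\ once $E$ has been formed.

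I do not anticipate a serious obstacle: the argument is essentially a black-box reduction to \cref{prop:compute-forward-intermediate-NC}, together with the elementary linear-algebraic fact underlying \cref{prop:zeroness-in-P}. The only points requiring (brief) care are that all numerators and denominators of the rationals involved remain of polynomial bit length (which is inherited from the \NC\ computation of~$E$ and from the \NC\ rank algorithm cited there), and that checking finitely many rationals for equality with~$0$ is trivially in~\NC.
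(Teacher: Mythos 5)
Your proposal is correct and matches the paper's intended argument: the paper's proof is literally ``Analogous to the proof of \cref{prop:zeroness-in-P}'', i.e.\ compute a basis of the forward space via \cref{prop:compute-forward-intermediate-NC} and test orthogonality against~$\eta$, exactly as you do. Your shortcut via $E\eta = 0_n$ is a valid minor streamlining but not a different approach.
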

\begin{proof}
Analogous to the proof of \cref{prop:zeroness-in-P}.
\end{proof}

\begin{theorem}[cf.~\cref{thm:equivalence-in-P}] \label{thm:equivalence-in-P-NC}
Let $\A_i = (n_i, \Sigma, M_i, \alpha_i, \eta_i)$ for $i \in \{1,2\}$ be automata.
One can check in \NC\ whether $\A_1, \A_2$ are equivalent.
\end{theorem}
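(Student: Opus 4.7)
The plan is to follow exactly the structure of the polynomial-time proof of \cref{thm:equivalence-in-P}, substituting the \NC\ analogue at the one place where complexity matters.

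First, I would invoke \cref{prop:closure} to build the difference automaton $\A_-$ with $\sem{\A_-} = \sem{\A_1} - \sem{\A_2}$. Since \cref{prop:closure} guarantees that $\A_-$ can be constructed in logarithmic space, and $\text{L} \subseteq \text{NL} \subseteq \NC$, the construction is available in~\NC. (In fact the construction is purely syntactic: one assembles the block-diagonal matrices $M_-(a)$, concatenates $(\alpha_1,-\alpha_2)$, and stacks $(\eta_1,\eta_2)^T$, which is manifestly NC-computable.) The resulting $\A_-$ has $n_1 + n_2$ states.

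Next, observe that $\A_1$ and $\A_2$ are equivalent if and only if $\A_-$ is zero. Applying \cref{prop:zeroness-in-P-NC} to $\A_-$ then decides equivalence in~\NC.

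The only step that could be mistaken for an obstacle is composing the two \NC\ computations: the output of the construction of $\A_-$ is fed into the zeroness check. But since the construction of $\A_-$ lies in~\text{L} (and produces a polynomial-size output), and \NC\ is closed under composition with logspace-computable reductions, this presents no difficulty. Hence the theorem follows with essentially no new work beyond what has already been done in \cref{prop:closure} and \cref{prop:zeroness-in-P-NC}.
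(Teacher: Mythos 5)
Your proposal is correct and matches the paper's own argument exactly: the paper proves \cref{thm:equivalence-in-P-NC} by declaring it ``analogous to the proof of \cref{thm:equivalence-in-P}'', i.e., construct $\A_-$ via \cref{prop:closure} and then invoke \cref{prop:zeroness-in-P-NC}. Your additional remarks on the logspace constructibility of $\A_-$ and the composability of the two steps only make explicit what the paper leaves implicit.
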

\begin{proof}
Analogous to the proof of \cref{thm:equivalence-in-P}.
\end{proof}

\begin{proposition}[cf.~\cref{prop:compute-conjugate}] \label{prop:compute-conjugate-NC}
Let $\A = (n, \Sigma, M, \alpha, \eta)$ be an automaton.
One can compute in~\NC:
\begin{itemize}
\item a matrix~$F$ whose rows form a basis of the forward space of~$\A$,
\item the forward conjugate of~$\A$ with base~$F$.
\end{itemize}
The statement holds analogously for backward conjugates.
\end{proposition}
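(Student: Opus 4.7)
The plan is to reuse Proposition~\ref{prop:compute-forward-intermediate-NC} for the basis and then compute the remaining data of the forward conjugate by a constant number of NC-computable linear-algebra operations.

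First I would invoke Proposition~\ref{prop:compute-forward-intermediate-NC} to obtain in NC a matrix $F \in \K^{\nF \times n}$ whose rows form a basis of the forward space. The forward conjugate $\AF = (\nF, \Sigma, \MF, \alphaF, F\eta)$ is determined by the equations $\alphaF F = \alpha$, $\MF(a) F = F M(a)$ for each $a \in \Sigma$, and the product $F \eta$; the first two have unique solutions because $F$ has full row rank. To solve them in NC, I would select a set of column indices $I \subseteq \{1, \ldots, n\}$ of size $\nF$ such that the square submatrix $F_I := F[\cdot, I]$ is non-singular. The greedy rule used in the proof of Proposition~\ref{prop:compute-forward-intermediate-NC} applies directly: include $j$ in $I$ iff $\rank(F[\cdot, \{1, \ldots, j\}]) > \rank(F[\cdot, \{1, \ldots, j-1\}])$. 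All $n$ ranks can be computed in parallel in NC, so $I$ is obtained in NC.

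Matrix inversion over~$\Q$ lies in NC as a consequence of the same NC rank/linear-system machinery cited in Proposition~\ref{prop:compute-forward-intermediate-NC}. Projecting the defining equations to the columns indexed by $I$ yields the closed forms $\alphaF = \alpha[I]\, F_I^{-1}$ and $\MF(a) = (F M(a))[\cdot, I]\, F_I^{-1}$ for each $a \in \Sigma$. Each is a bounded composition of matrix multiplications and one $\nF \times \nF$ inversion, and $F \eta$ is a single matrix-vector product, so the entire forward conjugate is produced in NC. The backward case is entirely symmetric.

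The only non-routine point is the NC selection of a non-singular square submatrix $F_I$; given that, everything else reduces to parallel matrix arithmetic, which is standard.
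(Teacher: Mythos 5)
Your proposal is correct and follows essentially the same route as the paper: obtain $F$ from \cref{prop:compute-forward-intermediate-NC}, then observe that $\alphaF$ and the $\MF(a)$ are the unique solutions of linear systems solvable in \NC. The only difference is that you make explicit (via \NC\ rank computations to select a non-singular submatrix $F_I$ and \NC\ matrix inversion) what the paper delegates to the cited fact that linear systems can be solved in \NC.
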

\begin{proof}
The first item follows from \cref{prop:compute-forward-intermediate-NC}.
The second item follows from the fact that linear systems of equations can be solved in \NC~\cite{Csanky76}.
\end{proof}

\begin{theorem}[cf.~\cref{thm:minimization}] \label{thm:minimization-NC}
Given an automaton, one can compute in~\NC\ a minimal equivalent automaton.
\end{theorem}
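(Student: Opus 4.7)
The plan is to mimic the proof of \cref{thm:minimization} but invoke the \NC\ version of conjugate computation instead of the polynomial-time version. Specifically, I would first apply \cref{prop:compute-conjugate-NC} to the input automaton~$\A$ to produce a forward conjugate~$\AF$, and then apply the backward analogue of \cref{prop:compute-conjugate-NC} to~$\AF$ to obtain a backward conjugate~$\A'$ of~$\AF$.

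By \cref{prop:equivalence} (applied twice), $\A'$ is equivalent to~$\A$. For minimality, \cref{prop:forw-conj-is-forw-min} ensures~$\AF$ is forward-minimal, and then \cref{prop:new-minimal} gives that its backward conjugate $\A'$ is minimal. So the correctness argument is identical to that of \cref{thm:minimization}; only the complexity claim needs to change.

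The only real thing to check is that this two-stage pipeline remains in~\NC. Since \NC\ is closed under composition of a constant number of \NC\ computations (the output size of the intermediate automaton is polynomially bounded in~$n$), invoking \cref{prop:compute-conjugate-NC} twice gives an overall \NC\ procedure. I would not expect any genuine obstacle here: the main substance was already absorbed into \cref{prop:compute-conjugate-NC}, which in turn rested on the \NC\ basis computation of \cref{prop:compute-forward-intermediate-NC} and the \NC\ linear-system solver of Csanky. Thus the proof can be stated in one or two sentences, reading essentially ``Analogous to the proof of \cref{thm:minimization}, using \cref{prop:compute-conjugate-NC} in place of \cref{prop:compute-conjugate}.''
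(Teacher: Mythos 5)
Your proposal is correct and matches the paper's proof, which is literally ``Analogous to the proof of \cref{thm:minimization}'': invoke \cref{prop:compute-conjugate-NC} twice, with equivalence from \cref{prop:equivalence} and minimality from \cref{prop:forw-conj-is-forw-min,prop:new-minimal}. Your extra remark that \NC\ is closed under composing a constant number of \NC\ computations with polynomially bounded intermediate output is the right justification for the complexity claim.
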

\begin{proof}
Analogous to the proof of \cref{thm:minimization}.
\end{proof}

\br

\Cref{thm:equivalence-in-P-NC} about equivalence checking was proved by Tzeng~\cite{Tzeng96}.
\Cref{thm:minimization-NC} about minimization was obtained in~\cite[Section~4.2]{17KMW-LMCS}.
It is not known whether ``counterexample'' words for equivalence can be computed in~\NC.
They can be computed in randomized~\NC~\cite{KieferMOWW13}.

\paragraph*{Acknowledgements.}
The author thanks Oscar Darwin, Qiyi Tang, and Cas Widdershoven for comments that helped to improve the text.

\bibliographystyle{plain} 
\bibliography{references}
\end{document}